\newtheorem{theorem}{Theorem}
\newtheorem{lemma}{Lemma}
\begin{document}
\preprint{APS/123-QED}
\title{Quantum uncertainty relation using coherence}
\date{\today}
\author{Xiao Yuan}
\affiliation{Center for Quantum Information, Institute for Interdisciplinary Information Sciences, Tsinghua University, Beijing 100084, China}
\author{Ge Bai}
\affiliation{Department of Computer Science, The University of Hong Kong, Pokfulam Road, Hong Kong}
\author{Tianyi Peng}
\affiliation{Center for Quantum Information, Institute for Interdisciplinary Information Sciences, Tsinghua University, Beijing 100084, China}
\author{Xiongfeng Ma}
\email{xma@tsinghua.edu.cn}
\affiliation{Center for Quantum Information, Institute for Interdisciplinary Information Sciences, Tsinghua University, Beijing 100084, China}


\begin{abstract}
Measurement outcomes of a quantum state can be genuinely random (unpredictable) according to the basic laws of quantum mechanics. The Heisenberg-Robertson uncertainty relation puts constrains on the accuracy of two noncommuting observables. The existing uncertainty relations adopt variance or entropic measures, which are functions of observed outcome distributions, to quantify the uncertainty. According to recent studies of quantum coherence, such uncertainty measures contain both classical (predictable) and quantum (unpredictable) components. In order to extract out the quantum effects, we define quantum uncertainty to be the coherence of the state on the measurement basis. We discover a quantum uncertainty relation of coherence between two measurement non-commuting bases. Furthermore, we analytically derive the quantum uncertainty relation for the qubit case with three widely adopted coherence measures, the relative entropy using coherence, the coherence of formation, and the $l_1$ norm of coherence. 
\end{abstract}
\pacs{}
\vspace{2pc}
\maketitle

\section{Introduction}
%
%
%
%
The uncertainty relation is one of the fundamental laws of quantum theory that differentiates it from the classical ones. Given two observables that have a nonzero average value of their commutator, the uncertainty relation constrains the possibility of measuring them simultaneously accurately. By adopting the variance as the measure for the uncertainty, the Heisenberg-Robertson uncertainty relation \cite{Robertson29} gives a lower bound of total variance of the two observables. Such uncertainty relations based on variance generally rely on how the observables are defined. Solely based on the observed statistics of the random measurement outcomes, Deutsch \cite{Deutsch83} and Maassen and Uffink \cite{Maassen88} further proposed uncertainty relations based on the Shannon entropy of the measurement outcomes. Considering two measurement bases $\mathbb{X} = \{\ket{x}\}$ and $\mathbb{Z} = \{\ket{z}\}$, the Maassen and Uffink uncertainty relation is given by ,
\begin{equation}\label{Eq:MUuncertainty}
  H(X) + H(Z) \ge - \log_2  c_{\mathrm{max}},
\end{equation}
where $X$ and $Z$ denote the measurement outcomes in the $\mathbb{X}$ and $\mathbb{Z}$ bases, respectively; $c_{\mathrm{max}}$ is an abbreviation of $c_{\mathrm{max}}(\mathbb{X},\mathbb{Z}) = \max_{\ket{x}\in \mathbb{X}, \ket{z}\in \mathbb{Z}}|\braket{x|z}|^2$ with a maximization over all measurement bases; and $H$ denotes the Shannon entropy of the measurement outcomes.

With the development of quantum information science, the uncertainty relation has been shown to be a powerful tool in the analyses of many quantum information tasks, including quantum key distribution \cite{Koashi2009Complement,berta2010uncertainty}, quantum random number generation \cite{Vallone14, Cao16}, entanglement witness \cite{Berta14}, EPR steering \cite{Walborn11, Schneeloch13}, and quantum metrology \cite{giovannetti2011advances}. With different measures of uncertainty, different forms of uncertainty relations have been derived. We refer to Ref.~\cite{coles2015entropic} for a recent review on this subject.

Essentially, an uncertainty relation characterizes the relationship between the nominal uncertainties of two measurements. Generally, the nominal uncertainty of a measurement contains classical (predictable) and quantum (unpredictable) parts that originate from classical noise and quantum effect, respectively \cite{YuanPhysRevA2015}. By saying predictable, we mean that the measurement outcome can be predicted by other systems.
For instance, considering the projective measurement in the $\mathbb{Z} = \{\ket{0},\ket{1}\}$ basis, the uncertainty of measuring $\rho=1/2(\ket{0}\bra{0}+\ket{1}\bra{1})$ originates from classical noise which is predictable; while the uncertainty of measuring $\ket{\psi}=1/\sqrt{2}(\ket{0}+\ket{1})$ originates from quantum effect which is unpredictable.
Alternatively, to see why the uncertainty of measuring $\rho=1/2(\ket{0}\bra{0}+\ket{1}\bra{1})$ is predictable, we consider an adversary Eve who holds its purification. Then, the joint state is $1/\sqrt{2}(\ket{00}+\ket{11})$ and the measurement result (in the $\mathbb{Z}$ basis) of the first system can always be predicted by the measurement result (in the $\mathbb{Z}$ basis) of the second system.

There is a fundamental difference between classical and quantum uncertainties as the unpredictability in measurement outcomes is a unique quantum feature.
Based on nominal uncertainties that involve both classical and quantum uncertainties, conventional uncertainty relations consist of both classical and quantum components.
To have a genuine quantum uncertainty relation, we should only use the quantum uncertainty.
Several efforts have been devoted along this line. By considering a joint system $\rho_{AB}$, Berta et al. \cite{berta2010uncertainty} proposed to use the conditional entropy to characterize the uncertainty of measuring system $A$ in the presence of the memory of system $B$. The state after $X$ basis measurement on system $A$ is given by $\rho_{XB}=\sum_x(\ket{x}\bra{x}_A\otimes I_B)\rho_{AB}(\ket{x}\bra{x}_A\otimes I_B)$, where $I_B$ denotes the identity matrix on system $B$.  The conditional entropy $H(X|B)$ of $\rho_{XB}$ thus measures the uncertainty of the measurement outcome conditioned on quantum system $B$. Define $H(Z|B)$ similarly, the uncertainty relation by Berta et al. is
\begin{equation}\label{}
  H(X|B) + H(Z|B) \ge -\log_2  c_{\mathrm{max}} + H(A|B).
\end{equation}
Here the conditional entropy $H(A|B)$ is given by $H(\rho_{AB})-H(\rho_{B})$, $\rho_{B}=Tr_A[\rho_{AB}]$, and $H$ represents the Von-Neumann entropy. With a slightly abuse of notation, we use $H$ to both represent the Shannon and Von Neumann entropy in this work.

The uncertainty relation proposed by Berta et al. depends on how system $B$ correlates with system $A$ before the measurement. To extract the genuine quantum uncertainty, we thus need to consider system $B$ as a purification of system $A$ and we can prove  \cite{yuan2016interplay} that,
\begin{equation}\label{Eq:quantumuncertainty}
\begin{aligned}
  C_{RE}^{\mathbb{X}}(\rho) \equiv H(X|B) &= H(X) - H(\rho_A), \\
  C_{RE}^{\mathbb{Z}}(\rho) \equiv H(Z|B) &= H(Z) - H(\rho_A).
\end{aligned}
\end{equation}
With the quantum uncertainties $C^\mathbb{X}_{\mathrm{RE}}(\rho)$ and $C^\mathbb{Z}_{\mathrm{RE}}(\rho)$, Korzekwa et al.  \cite{Korzekwa14} proposed another quantum uncertainty relation,
\begin{equation}\label{eq:Korzekwa}
  C^\mathbb{X}_{\mathrm{RE}}(\rho) + C^\mathbb{Z}_{\mathrm{RE}}(\rho) \ge -(1 -H(\rho))\log_2 c_{\mathrm{max}}.
\end{equation}

In general, the quantum uncertainty should only originate from quantum effects.
Recently, a coherence framework has been proposed to quantify the amount of quantumness in a given measurement basis \cite{Baumgratz14}. Furthermore, it is shown that the quantum (unpredictable) randomness or uncertainty measured on a basis corresponds to the coherence on the same basis \cite{YuanPhysRevA2015, yuan2016interplay}. Interestingly, the quantum uncertainty $C^\mathbb{X}_{\mathrm{RE}}(\rho)$ defined in Eq.~\eqref{Eq:quantumuncertainty} is also a coherence measure in the $X$ basis. As coherence is defined to quantify the quantumness on the measurement basis, it is reasonable to define coherence to be generalized quantum uncertainties. We refer to Ref.~\cite{streltsov2016quantum} for a detailed review of the coherence resource theory and its application in quantum information tasks.


In this paper, we investigate the quantum uncertainty relation using coherence on two measurement bases. As quantum uncertainty only originates from quantum effects, the lower bound of the uncertainty relation will also be genuinely quantum. Following this intuition, we first show that the uncertainty relation has a nontrivial lower bound as long as the state is not maximally mixed and the measurement bases satisfy a simple condition. Therefore, the quantum feature such as the purity of states is the main origin of the quantum uncertainty relation. Furthermore, we derive explicit quantum uncertainty relations for qubit states with several widely adopted coherence measures, the relative entropy of coherence, the $l_1$ norm of coherence, and the coherence of formation. We show that several conventional uncertainty relations can be slightly modified to be quantum uncertainty relations with the relative entropy of coherence. In comparison, we find that our result with the relative entropy of coherence outperforms existing ones when the purity of the state is not large and $c_{\mathrm{max}}$ not too small. In addition, we prove the tightness for the result with the $l_1$ norm of coherence.

\section{General quantum uncertainty---coherence}
In this section, we first review the resource framework for quantum coherence \cite{Baumgratz14}. We focus on a $d$-dimensional Hilbert space $\mathcal{H}_d$ and a computational basis $\mathbb{J} = \{\ket{1},\ket{2},\dots,\ket{d}\}$. A state $\sigma$ is called an \emph{incoherent} state when
\begin{equation}\label{Eq:incoherent}
  \sigma = \sum_{i}p_i\ket{i}\bra{i},
\end{equation}
where $p_i\in[0,1],\forall i\in\{1,2,\dots,d\}$ and $\sum_ip_i=1$. When a state $\rho$ cannot be written in the form of Eq.~\eqref{Eq:incoherent}, we call it a coherent state. Furthermore, incoherent operations are defined by physical operations that converts an incoherent state only to an incoherent state. With the definition of incoherent state and incoherent operation, the amount of coherence can be measured by a  function $C$ that maps state $\rho$ to a non-negative real value. In addition, coherence measure should satisfy the following properties.
\begin{enumerate}
  \item $C(\rho)=0$ for incoherent state and $C(\rho)>0$ for coherent state;
  \item $C(\rho)$ cannot increase under incoherent operations;
  \item $C(\rho)$ is convex.
\end{enumerate}
We refer to Ref.~\cite{Baumgratz14} for a detailed introduction of the resource framework of coherence.

Here, we focus on three widely adopted coherence measures, the relative entropy of coherence, the $l_1$ norm of coherence, and the coherence of formation.
The relative entropy \cite{Baumgratz14} of coherence is defined by
\begin{equation}\label{}
  C^\mathbb{J}_{\mathrm{RE}}(\rho) = H(\rho^{\mathrm{diag}}) - H(\rho),
\end{equation}
where $\rho^{\mathrm{diag}} = \sum_{i}\rho_{i,i}\ket{i}\bra{i}$ represents the state of $\rho$ after dephasing in the $\mathbb{J}$ basis, $\rho_{i,j}=\bra{i}\rho\ket{j}$ and $H$ is the von Neumann entropy of quantum states.
The coherence of formation \cite{aberg2006quantifying, YuanPhysRevA2015} is defined by
\begin{equation}\label{}
  C^\mathbb{J}_{\mathrm{CF}}(\rho) = \min_{p_e,\ket{\psi_e}}\sum_e p_e C^\mathbb{J}_{\mathrm{RE}}(\ket{\psi_e}\bra{\psi_e}),
\end{equation}
where the minimization is over all possible decompositions of $\rho = \sum_e p_e\ket{\psi_e}\bra{\psi_e}$, where $p_e$ is a probability distribution.
Finally, the $l_1$ norm of coherence \cite{Baumgratz14} is defined by
\begin{equation}\label{}
  C^\mathbb{J}_{l_1}(\rho) = \sum_{i\ne j}|\rho_{i,j}|,
\end{equation}
which is the sum of all the absolute values of the off-diagonal terms.

With the resource framework, we can see that coherence is indeed a good measure for quantum uncertainty. First, we consider the measurement outcome of an incoherent state $\sigma = \sum_{i}p_i\ket{i}\bra{i}$. It is straightforward to see that the measurement outcome (in $J$ basis) looks random. That is, the probability of obtaining the $i$th outcome is $p_i$. While, the measurement uncertainties of incoherent states originate from the classical noise of the state instead of genuine random features. We do not attribute such uncertainties as quantum uncertainties. Instead, it is shown that the measurement outcome contains genuine randomness (cannot be precisely predicted) as long as the state contains coherence on the measurement basis \cite{YuanPhysRevA2015,yuan2016interplay}. Therefore, when considering quantum uncertainty as the unpredictability of the measurement outcomes, then coherence supplies as a good measure.

In the following, we will investigate quantum uncertainty relation of coherence on two general measurement bases $\mathbb{X} = \{\ket{x}\}$ and $\mathbb{Z} = \{\ket{z}\}$. That is, we need to derive the following inequality,
\begin{equation}\label{Eq:uncertainty}
  C^\mathbb{X}(\rho) + C^\mathbb{Z}(\rho) \ge f(\mathbb{X},\mathbb{Z},\rho),
\end{equation}
where $C$ is a proper coherence measure and $f$ is a function of the measurement bases and the state $\rho$.


\section{Quantum uncertainty relation}
In this section, we present our result of quantum uncertainty relation. First, we show that there is always a nonzero lower bound to Eq.~\eqref{Eq:uncertainty} when the two bases satisfy a simple condition. Then, we explicitly propose the quantum uncertainty relation with the three measures and qubit states.

In the conventional uncertainty relation, such as the one in Eq.~\eqref{Eq:MUuncertainty}, the lower bound is state independent and only depends on the measurement bases. On the other hand, the uncertainty relation in Eq.~\eqref{Eq:uncertainty} will generally depend on the state $\rho$. For instance, when the state is maximally incoherent, i.e., $\rho = 1/d\sum_i\ket{i}\bra{i}$, which is an incoherent state for any basis, the lower bound is $f(\mathbb{X},\mathbb{Z},\rho)=0$ as we have $C^\mathbb{X}(\rho) = C^\mathbb{Z}(\rho) = 0$. Because the maximally incoherent state can be considered as a classical state for an arbitrary measurement basis, there will be no quantum uncertainty for the maximally incoherent state.

Now we consider the quantum uncertainty relation with states $\rho$ that is not maximally mixed. Intuitively, when the two measurement bases are not compatible, the quantum uncertainties in the two bases cannot be simultaneously zero. To be more rigourous, we define the incompatibility of two operators by $c_{\mathrm{min}} = \min_{\ket{x}\in \mathbb{X}, \ket{z}\in \mathbb{Z}}|\braket{x|z}|^2$ and prove the following result.

\begin{theorem}\label{Theorem:general}
The sum of the quantum uncertainties $C^\mathbb{X}(\rho) + C^\mathbb{Z}(\rho)$ in Eq.~\eqref{Eq:uncertainty} is positive for any state $\rho$ that is not maximally mixed and two measurement bases that are incompatible, i.e. $c_{\mathrm{min}}>0$.
\end{theorem}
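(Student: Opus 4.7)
The plan is to prove the contrapositive: if $C^\mathbb{X}(\rho) + C^\mathbb{Z}(\rho) = 0$ while $c_{\min} > 0$, then $\rho$ must be the maximally mixed state. By property 1 of any coherence measure, each term is non-negative and vanishes precisely when the state is incoherent in the corresponding basis. The hypothesis therefore forces $\rho$ to be simultaneously diagonal in both bases, so that there exist probability distributions $\{p_x\}$ and $\{q_z\}$ with
\begin{equation*}
\rho = \sum_x p_x \ket{x}\bra{x} = \sum_z q_z \ket{z}\bra{z}.
\end{equation*}
Both expressions are then spectral decompositions of $\rho$, so every $\ket{x}$ is an eigenvector of $\rho$ with eigenvalue $p_x$ and every $\ket{z}$ with eigenvalue $q_z$.

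Next I would extract a single clean identity by evaluating $\bra{z}\rho\ket{x}$ in two ways: applying $\rho$ to the right gives $p_x\braket{z|x}$, while applying it to the left gives $q_z\braket{z|x}$. Subtraction yields $(p_x - q_z)\braket{z|x} = 0$ for every pair $(x,z)$. The incompatibility assumption $c_{\min} > 0$ guarantees $|\braket{z|x}|^2 \ge c_{\min} > 0$, so no overlap vanishes. Consequently $p_x = q_z$ for all $x$ and all $z$, which collapses all eigenvalues of $\rho$ to a single common value. Normalization $\sum_x p_x = 1$ then pins that value to $1/d$, giving $\rho = I/d$ and contradicting the hypothesis that $\rho$ is not maximally mixed.

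The argument reduces to a one-line eigenspace observation once the identity $(p_x - q_z)\braket{z|x} = 0$ is in hand, so I do not foresee any genuinely hard step; the only moving part is the passage from $C^\mathbb{X}(\rho) = C^\mathbb{Z}(\rho) = 0$ to simultaneous incoherence, which is exactly property 1 of a coherence measure. Notably, the proof is insensitive to the specific choice of $C$ (it uses only the axiomatic properties), to the dimension $d$, and to possible eigenvalue degeneracies, since the non-vanishing overlap argument automatically handles those cases. This matches the level of generality asserted by the theorem, and sets the stage for the subsequent sections where explicit quantitative lower bounds are derived for the three concrete coherence measures on qubits.
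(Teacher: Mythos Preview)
Your proof is correct and follows essentially the same route as the paper: both reduce to showing that simultaneous incoherence in $\mathbb{X}$ and $\mathbb{Z}$ forces equality of eigenvalues via the identity $\bra{x}\rho\ket{z}=p_x\braket{x|z}=q_z\braket{x|z}$, then invoke $c_{\min}>0$ to cancel the overlap. The only cosmetic difference is that the paper assumes two distinct eigenvalues exist and derives a contradiction, whereas you directly show all $p_x=q_z$ and conclude $\rho=I/d$.
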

\begin{proof}
Suppose the $C^\mathbb{X}(\rho) = C^\mathbb{Z}(\rho) = 0$. This means that $\rho$ is incoherent state in both the $\mathbb{X}$ and $\mathbb{Z}$ bases according to the properties of coherence measures. Therefore, we have that
\begin{equation}\label{}
  \rho = \sum_x p_x\ket{x}\bra{x} = \sum_z p_z\ket{z}\bra{z}.\nonumber
\end{equation}

Because $\rho\neq I/d$, $\rho$ has at least two different eigenvalues, say $\lambda_1, \lambda_2$. Denote the eigenvectors for $\lambda_i$ to be $\ket{x_i}$ and $\ket{z_i}$ in the $\mathbb{X}$ and $\mathbb{Z}$ bases, respectively. Then we have
\begin{equation}
\bra{x_1}\rho\ket{z_2} = \lambda_1\braket{x_1|z_2} = \lambda_2\braket{x_1|z_2}.\nonumber
\end{equation}
Because $|\braket{x_1|z_2}| \ge \sqrt{c_{\mathrm{min}}} > 0$, we thus have $\lambda_1 = \lambda_2$, which introduces the contradiction.
\end{proof}

In general, for a quantum state $\rho$ in a $d$-dimensional Hilbert space, we can make use of purity $P(\rho)=Tr[\rho^2]$ to measure how pure the state is. The purity value reaches its minimum $1/d$ for maximally mixed state and its maximum $1$ for pure state. In Theorem \ref{Theorem:general}, the condition that the state is not maximally mixed can thus also be expressed as $P(\rho)>1/d$. Furthermore, we explicitly show the quantum uncertainty relations for qubit states with the three different measures.

\subsection{Quantum uncertainty relation of the relative entropy of coherence}
The quantum uncertainty relation with the relative entropy of coherence can be derived from existing results. For example, one can apply the conventional uncertainty relation Eq.~\eqref{Eq:MUuncertainty} to obtain
\begin{equation}  \label{}
\begin{aligned}
  H(X) + H(Z) - 2 H(\rho) &\ge - \log_2 c - 2H(\rho),\nonumber
\end{aligned}
\end{equation}
that is
\begin{equation}\label{}
  C^\mathbb{X}_{\mathrm{RE}}(\rho) + C^\mathbb{Z}_{\mathrm{RE}}(\rho) \ge - \log_2 c - 2 H(\rho).
\end{equation}
Two strengthened versions of Eq.~\eqref{Eq:MUuncertainty} for general mixed states are respectively given by Berta et al. \cite{berta2010uncertainty} and Jorge Sanches-Ruiz \cite{Jorge98} as
\begin{equation}\label{Eq:berta1}
\begin{aligned}
  H(X) + H(Z) &\ge -\log_2 c_{\mathrm{max}} + H(\rho),\\
H(X) + H(Z) &\ge H\left(\frac{1+\sqrt{2c_{\mathrm{max}}-1}}{2}\right).
  \end{aligned}
\end{equation}
Thus, two tighter quantum uncertainty relations for the relative entropy of coherence are
\begin{equation}\label{eq:berta}
C^\mathbb{X}_{\mathrm{RE}}(\rho) + C^\mathbb{Z}_{\mathrm{RE}}(\rho) \ge - \log_2 c_{\mathrm{max}} - H(\rho),
\end{equation}
\begin{equation}\label{eq:sqrtc}
  C^\mathbb{X}_{\mathrm{RE}}(\rho) + C^\mathbb{Z}_{\mathrm{RE}}(\rho) \ge H\left(\frac{1+\sqrt{2c_{\mathrm{max}}-1}}{2}\right) - 2H(\rho).
\end{equation}
A similar relation to Eq.~\eqref{eq:berta} is derived by Singh et al.~\cite{math4030047}. Another quantum uncertainty relation is given by Korzekwa et al.~\cite{Korzekwa14} as in Eq.~\eqref{eq:Korzekwa}. In our work, by considering the geometric structure of qubit states, we also propose a different quantum uncertainty relation with the relative entropy of coherence. For simplicity, we denote $c_{\mathrm{max}}$ as $c$ and $P$ to be the purity of the state $\mathrm{Tr}[\rho^2]$. Note that we also have $c = 1-c_{\mathrm{min}}$.

\begin{theorem}\label{Theo:relative}
Given a qubit state $\rho$ and two measurement bases $\mathbb{X} = \{\ket{x}\}$ and $\mathbb{Z} = \{\ket{z}\}$, the quantum uncertainty relation of the relative entropy of coherence is
\begin{equation}\label{eq:result2}
  C^\mathbb{X}_{\mathrm{RE}}(\rho) + C^\mathbb{Z}_{\mathrm{RE}}(\rho) \ge H\left(\frac{\sqrt{2P-1}(2\sqrt{c}-1)+1}{2}\right) - H\left(\rho\right).
\end{equation}
\end{theorem}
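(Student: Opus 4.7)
The plan is to pass to the Bloch-sphere parametrisation and turn the statement into a low-dimensional convex minimisation. I would write $\rho=(I+\vec r\cdot\vec\sigma)/2$ with Bloch-vector length $r=\sqrt{2P-1}$, and let $\hat x,\hat z$ denote the Bloch directions of the two measurement bases, so that $|\hat x\cdot\hat z|=2c-1$ and $\sqrt c=\cos(\theta/2)$ with $\theta=\arccos|\hat x\cdot\hat z|\in[0,\pi/2]$. Setting $a=|\hat r\cdot\hat x|$ and $b=|\hat r\cdot\hat z|$ and using the symmetry $H(p)=H(1-p)$ of the binary entropy, one has $H(X)=H((1+ra)/2)$, $H(Z)=H((1+rb)/2)$, and $H(\rho)=H((1+r)/2)$. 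By Eq.~\eqref{Eq:quantumuncertainty}, Theorem~\ref{Theo:relative} is therefore equivalent to
\[
H\!\left(\tfrac{1+ra}{2}\right)+H\!\left(\tfrac{1+rb}{2}\right)\ \ge\ H\!\left(\tfrac{1+r}{2}\right)+H\!\left(\tfrac{1+r(2\sqrt c-1)}{2}\right).
\]

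The first step is the geometric bound $a+b\le 2\sqrt c$. I would pick signs $\epsilon_x,\epsilon_z\in\{\pm 1\}$ so that $a=\epsilon_x\,\hat r\cdot\hat x\ge 0$ and $b=\epsilon_z\,\hat r\cdot\hat z\ge 0$, and combine Cauchy--Schwarz with $|\hat x\cdot\hat z|=2c-1$:
\[
a+b=\hat r\cdot(\epsilon_x\hat x+\epsilon_z\hat z)\le\|\epsilon_x\hat x+\epsilon_z\hat z\|=\sqrt{2+2\epsilon_x\epsilon_z\,\hat x\cdot\hat z}\le\sqrt{2+2|\hat x\cdot\hat z|}=2\sqrt c.
\]
Consequently $(u,v):=(ra,rb)$ always lies in the convex pentagon
\[
P=\{(u,v)\in[0,r]^2\ :\ u+v\le 2r\sqrt c\},
\]
which, since $c\in[1/2,1)$ in any non-trivial qubit setting, has vertices $(0,0),\,(r,0),\,(r,r(2\sqrt c-1)),\,(r(2\sqrt c-1),r),\,(0,r)$.

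The second step is a vertex minimisation. Because $H$ is concave and $G(u,v):=H((1+u)/2)+H((1+v)/2)$ is a sum of two concave single-variable functions, $G$ is jointly concave on $[0,r]^2$; its minimum on the convex polytope $P$ is then attained at a vertex. Direct evaluation of $G$ at the five vertices, together with $H\le 1$, identifies the minimum as
\[
G(r,r(2\sqrt c-1))=H\!\left(\tfrac{1+r}{2}\right)+H\!\left(\tfrac{1+r(2\sqrt c-1)}{2}\right),
\]
and combining with $(ra,rb)\in P$ yields the displayed inequality. Subtracting $2H(\rho)=2H((1+r)/2)$ from both sides then produces the coherence form asserted in Theorem~\ref{Theo:relative}.

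The main obstacle is really only the geometric bound $a+b\le 2\sqrt c$: the sign-choice reduction has to be handled cleanly so that the absolute values on $a,b$ (coming from the $H(p)=H(1-p)$ symmetry) are compatible with the vector inner product leading to Cauchy--Schwarz. Once that is in place, the concavity-plus-vertex step is routine convex analysis. The resulting bound is not tight in general---$P$ strictly contains the image of the genuinely admissible $(ra,rb)$---but this enlargement is harmless for a lower bound and is precisely what produces the clean closed form in $2\sqrt c-1$ that appears in the statement.
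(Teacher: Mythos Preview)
Your argument is correct and is a cleaner variant of the paper's proof. Both reduce the inequality to a two-variable concave minimisation, but you work with the folded Bloch cosines $a=|\hat r\cdot\hat x|$, $b=|\hat r\cdot\hat z|$, whereas the paper works with the overlap probabilities $a'=|\langle r|x\rangle|^{2}$, $b'=|\langle r|z\rangle|^{2}$ (so $a=|2a'-1|$). The folding makes $u\mapsto H((1+u)/2)$ monotone decreasing and concave on $[0,r]$, so a \emph{single} upper bound $a+b\le 2\sqrt{c}$---obtained in one line from Cauchy--Schwarz on Bloch vectors---already pins the feasible region down to the pentagon $P$, and the vertex check is immediate since $H\le 1$ forces the minimum to the vertex $(r,r(2\sqrt c-1))$.

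By contrast, in the paper's parametrisation $f(x)=H(xp+(1-x)(1-p))$ is only symmetric about $x=1/2$, not monotone, so the authors need all three inequalities of their Lemma~1 (namely $1-\sqrt c\le a'+b'\le 1+\sqrt c$ and $|a'-b'|\le\sqrt{1-c}$), then change variables to $A=a'+b'$, $B=b'-a'$, and carry out a small case analysis (their Fig.~1) comparing the two candidate corners $A=B=1-\sqrt c$ and $A=B=\sqrt{1-c}$. Your route bypasses the $|a'-b'|\le\sqrt{1-c}$ constraint and the corner comparison entirely. The trade-off is that the paper's Lemma~1 is stated (for the first two inequalities) in arbitrary dimension and is reused for the other coherence measures, while your Cauchy--Schwarz step is specifically three-dimensional Bloch geometry; for this qubit theorem, however, your approach is the more economical one.
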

Note that $C^{\mathbb{X}}_{RE}(\rho) = H(X) - H(\rho)$, thus the quantum uncertainty relation in Eq.  \eqref{eq:result2} is equivalent to a conventional uncertainty relation
\begin{equation}
  H(X) + H(Z) \geq  H\left(\frac{\sqrt{2P-1}(2\sqrt{c}-1)+1}{2}\right)+H(\rho).
\end{equation}
For a qubit state $\rho$, we consider its spectral decomposition as $\rho=p\ket{r}\bra{r}+(1-p)\ket{r_{\perp}}\bra{r_{\perp}}$ and the entropy $H(X)$ and $H(Z)$ can be rewritten as:
\begin{equation}
\begin{aligned}
  H(X) &= H(ap+(1-a)(1-p)),\\
  H(Z) &= H(bp+(1-b)(1-p)),
\end{aligned}
\end{equation}
where $a=|\braket{r|x}|^2$ and $b=|\braket{r|z}|^2$, $\ket{x}, \ket{z}$ is an arbitrary base vector in $\mathbb{X},\mathbb{Z}$, respectively. Note that the purity of the state $P$ is a function of $p$, i.e., $P = 2p^2-2p+1$.

Our aim is to find the relation between $H(X)+H(Z)$ and $c= \max_{\ket{x}\in \mathbb{X}, \ket{z}\in \mathbb{Z}}|\braket{x|z}|^2$. Without loss of generality, we can assume that $c = |\braket{x|z}|^2$. Otherwise, we can always choose $\ket{x}$ and $\ket{z}$ such that their inner product is maximized. Note that, $a$, $b$, and $c$ are the square of the inner product between each two vectors of the three normalized vectors $\ket{x},\ket{r},\ket{z}$. Intuitively,  when $c$ is large, $\ket{x}$ and $\ket{z}$ should be close and the difference between $a$ and $b$ also should be small. When c is small, i.e., $\ket{x}$ and $\ket{z}$ become more orthogonal, the sum of $a$ and $b$ should be near to one. This intuition is summarized as the following Lemma,


\begin{lemma}\label{Lemma:reLem}
For any three normalized vectors $\ket{x}$, $\ket{z}$ and $\ket{r} \in \mathbb{C}^{d}$, define $a=|\braket{r|x}|^2,b=|\braket{r|z}|^2,c=|\braket{x|z}|^2$, then
\begin{equation}
  a+b \leq 1+\sqrt{c}, |a-b|\leq \sqrt{1-c}.
\end{equation}
When $d=2$, we also have that
\begin{equation}
  1-\sqrt{c} \leq a+b.
\end{equation}
\end{lemma}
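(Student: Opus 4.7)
The three inequalities concern the squared inner products $a,b,c$ among the unit vectors $\ket{x}, \ket{z}, \ket{r}$. My plan is to handle the first two, which hold in arbitrary dimension, uniformly through the positivity of the $3\times 3$ Gram matrix $G_{ij} = \braket{v_i|v_j}$ of the three vectors, and to treat the third separately via the qubit Bloch-sphere representation, which supplies the extra structure that is genuinely needed in $d=2$.

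For the bound $a+b\le 1+\sqrt{c}$, I expand $\det G \geq 0$ to obtain $1 - a - b - c + 2\,\mathrm{Re}(\braket{x|z}\braket{z|r}\braket{r|x}) \geq 0$, and bound the real part by the modulus to obtain the symmetric inequality $a + b + c \leq 1 + 2\sqrt{abc}$. Writing $2\sqrt{abc} = 2\sqrt{c}\sqrt{ab}$ and applying the AM--GM bound $2\sqrt{ab}\leq a+b$ yields $a+b+c \leq 1 + \sqrt{c}(a+b)$, which rearranges to $(a+b)(1-\sqrt{c}) \leq 1-c = (1-\sqrt{c})(1+\sqrt{c})$, giving $a+b \leq 1+\sqrt{c}$ (the case $c=1$ is handled trivially).

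For the bound $|a-b|\le\sqrt{1-c}$, a complementary form of the Gram constraint is more convenient: decomposing $\ket{x}$ and $\ket{z}$ into their components along $\ket{r}$ and its orthogonal complement and applying Cauchy--Schwarz to each piece yields $\sqrt{c} \leq \sqrt{ab} + \sqrt{(1-a)(1-b)}$. Squaring and rearranging rewrites this as $(\sqrt{a(1-b)} - \sqrt{b(1-a)})^{2} \leq 1-c$. The key algebraic identity is then $a-b = a(1-b) - b(1-a) = (\sqrt{a(1-b)} - \sqrt{b(1-a)})(\sqrt{a(1-b)} + \sqrt{b(1-a)})$; the second factor is bounded above by $1$ via AM--GM, since $2\sqrt{ab(1-a)(1-b)} \leq ab + (1-a)(1-b) = 1-a-b+2ab$. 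Multiplying the two bounds gives $|a-b| \leq \sqrt{1-c}$.

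For the qubit-only bound $1-\sqrt{c}\le a+b$ I switch to the Bloch-sphere picture, in which each pure state corresponds to a unit vector $\vec{v}\in\mathbb{R}^{3}$ with $|\braket{u|v}|^{2} = (1+\vec{u}\cdot\vec{v})/2$. In these coordinates $a+b-1 = \vec{r}\cdot(\vec{x}+\vec{z})/2$, and since $|\vec{x}+\vec{z}|^{2} = 2(1+\vec{x}\cdot\vec{z})$ one has $\sqrt{c} = |\vec{x}+\vec{z}|/2$, so the target inequality reduces to $\vec{r}\cdot(\vec{x}+\vec{z}) \geq -|\vec{x}+\vec{z}|$, which is immediate from Cauchy--Schwarz together with $|\vec{r}|=1$. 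The main obstacle I anticipate is recognizing that this last inequality cannot be obtained from the Gram-matrix argument used for the first two: three mutually orthogonal vectors in dimension $d\ge 3$ give $a=b=c=0$ and hence $1-\sqrt{c}=1 > 0 = a+b$, so the qubit-specific $S^{2}$ structure is essential here.
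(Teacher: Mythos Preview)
Your proof is correct, but it takes a genuinely different route from the paper's argument.

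For $a+b\le 1+\sqrt{c}$, the paper fixes $\ket{x}=(1,0,\dots,0)$, parametrizes $\ket{z}=(\cos\tfrac{\alpha}{2},\sin\tfrac{\alpha}{2}\,\ket{w})$ and $\ket{r}=(\cos\tfrac{\theta}{2},\sin\tfrac{\theta}{2}\,\ket{s})$, and then reduces everything to the trigonometric bound $\cos^{2}\tfrac{\theta}{2}+\cos^{2}(\tfrac{\alpha}{2}-\tfrac{\theta}{2})\le 1+\cos\tfrac{\alpha}{2}$. Your Gram-determinant argument ($\det G\ge 0 \Rightarrow a+b+c\le 1+2\sqrt{abc}$ followed by $2\sqrt{ab}\le a+b$) is coordinate-free and shorter; it also makes the symmetry in $a,b,c$ manifest before you specialize to the $c$ variable.

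For the remaining two inequalities the contrast is more structural. The paper \emph{reuses} the first inequality: it obtains $|a-b|\le\sqrt{1-c}$ by substituting $\ket{z}\to\ket{z_\perp}_r$ (the Gram--Schmidt complement of $\ket{r}$ in $\mathrm{span}\{\ket{r},\ket{z}\}$), and it obtains $a+b\ge 1-\sqrt{c}$ in $d=2$ by substituting $\ket{x}\to\ket{x_\perp}$, $\ket{z}\to\ket{z_\perp}$. So in the paper, a single inequality does all the work. Your proofs of these two parts are independent arguments: a Cauchy--Schwarz decomposition giving $\sqrt{c}\le\sqrt{ab}+\sqrt{(1-a)(1-b)}$ together with the factorization $a-b=(\sqrt{a(1-b)}-\sqrt{b(1-a)})(\sqrt{a(1-b)}+\sqrt{b(1-a)})$, and a direct Bloch-sphere Cauchy--Schwarz estimate $\vec{r}\cdot(\vec{x}+\vec{z})\ge -|\vec{x}+\vec{z}|$. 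Each step is clean and elementary, and your counterexample of three mutually orthogonal vectors in $d\ge 3$ is a nice way to show that the last inequality genuinely requires the qubit structure. The trade-off: the paper's approach is more unified (one lemma, applied three times), while yours avoids explicit coordinates throughout and makes the role of Gram positivity and the Bloch metric more transparent.
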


\begin{proof}
Firstly we prove $a+b\leq 1+\sqrt{c}$. Given a suitable basis, we can assume that $\ket{x}=(1,0,...,0)$. Because the values of $a$, $b$ and $c$ are invariant by adding a constant phase $e^{i\theta}$ to $\ket{z}$ or $\ket{r}$, we can thus assume the first dimension of $\ket{z}$ and $\ket{r}$ to be non-negative real numbers. Denote $\ket{z}=(\cos\frac{\alpha}{2}, \sin\frac{\alpha}{2}\ket{w})$, $\ket{r} = (\cos \frac{\theta}{2}, \sin \frac{\theta}{2} \ket{s})$, where $\alpha,\theta\in [0,\pi]$ and $\ket{w}, \ket{s}$ are normalized vectors in $\mathbb{C}^{d-1}$. Then

\begin{equation} \label{eq:sumAB}
  \begin{aligned}
  &a + b \\
  =& |\braket{x|r}|^2+|\braket{z|r}|^2\\
  =& \cos^2 \frac{\theta}{2} + \left|\cos \frac{\theta}{2}\cos \frac{\alpha}{2} + \sin \frac{\theta}{2}\sin \frac{\alpha}{2}\braket{w|s}\right|^2\\
  \leq &\cos^2 \frac{\theta}{2} + \left(\left|\cos \frac{\theta}{2} \cos \frac{\alpha}{2}\right| + \left|\sin\frac{\theta}{2}\sin \frac{\alpha}{2}\braket{w|s} \right|\right)^2\\
  \leq& \cos^2 \frac{\theta}{2} + \left(\left|\cos \frac{\theta}{2} \cos \frac{\alpha}{2}\right| + \left|\sin\frac{\theta}{2}\sin \frac{\alpha}{2}\right| \right)^2 ~\alpha,\theta \in [0,\pi]\\
  =&\cos^2 \frac{\theta}{2} + \cos^2 (\frac{\alpha}{2} - \frac{\theta}{2}) \\
  =&1 + \frac{1}{2}(\cos \theta + \cos (\alpha - \theta))\\
  =&1 + \cos \frac{\alpha}{2} \cos \left(\theta - \frac{\alpha}{2} \right)\\
  \leq& 1 + \cos \frac{\alpha}{2}\\
  =& 1+ \sqrt{c}.
  \end{aligned}
\end{equation}

Next, we prove $|a-b|\leq \sqrt{1-c}$. We denote
\begin{eqnarray}
   \ket{z_{\perp}}_{r} &=& \frac{\ket{r}-\braket{z|r}\ket{z}}{\left| \ket{r}-\braket{z|r}\ket{z} \right|} = \frac{\ket{r} - \braket{z|r}\ket{z}}{\sqrt{1-b}},  \nonumber\\
   \ket{z_{\perp}}_{x} &=& \frac{\ket{x}-\braket{z|x}\ket{z}}{\left| \ket{x}-\braket{z|x}\ket{z} \right|} = \frac{\ket{x} - \braket{z|x}\ket{z}}{\sqrt{1-c}}\nonumber.
\end{eqnarray}
Then, applying the first result $|\braket{x|r}|^2+|\braket{z|r}|^2 \leq 1+\sqrt{|\braket{x|z}|^2}$ by replacing $\ket{z}$ with $\ket{z_{\perp}}_r$, we have
\begin{equation}
  \begin{aligned}
  a + |\braket{r|z_{\perp}}_r|^2 &\leq 1 + \sqrt{|\braket{x|z_{\perp}}_r|^2}\\
  a + 1 - b &\leq 1 + \sqrt{|(\braket{x|z}\bra{z}+\braket{x|z_{\perp}}_x\bra{z_\perp}_x)\ket{z_\perp}_r|^2}\\
  a - b &\leq \sqrt{|\braket{x|z_{\perp}}_x\bra{z_\perp}_x\ket{z_\perp}_r|^2}\\
        &= \sqrt{|\braket{x|z_{\perp}}_x|^2|\bra{z_\perp}_x\ket{z_\perp}_r|^2}\\
        &\leq \sqrt{|\braket{x|z_{\perp}}_x|^2}\\
        &= \sqrt{1-c}.
  \end{aligned}
\end{equation}
Similarly, we have $b-a \leq \sqrt{1-c}$ and hence $|a-b|\leq \sqrt{1-c}$.

When considering in two dimensions, we can apply the first result $|\braket{x|r}|^2+|\braket{z|r}|^2 \leq 1+\sqrt{|\braket{x|z}|^2}$ by replacing $\ket{x}=(x_1, y_1)$ with $\ket{x_\perp}=(-y_1,x_1)$, $\ket{z}=(x_2,y_2)$ with $\ket{z_\perp}=(-y_2,x_2)$. Then, we can check that $|\braket{r|x_\perp}|^2 = 1 - a$, $|\braket{r|z_\perp}|^2 = 1 - b$, and $|\braket{x_\perp|z_\perp}|^2 = c$, and we have
\begin{equation}
  \begin{aligned}
    |\braket{r|x_\perp}|^2 + |\braket{r|z_\perp}|^2 &\leq 1 + \sqrt{|\braket{x_\perp|z_\perp}|^2}\\
    1 - a + 1 - b &\leq 1 + \sqrt{c}\\
    a + b &\ge 1 - \sqrt{c}\\
  \end{aligned}
\end{equation}
\end{proof}

Now, we prove Theorem~\ref{Theo:relative}.
\begin{proof}
Denote $f(x)=H(xp+(1-x)(1-p))$, then we need to find the minimal value of $g = f(a) + f(b)$. Because $H$ is a concave function and symmetrical about $x=\frac{1}{2}$, it is easy to check that $f$ is also a concave function which is symmetrical about $x=\frac{1}{2}$:
\begin{equation}
\begin{aligned}
\frac{\partial^2 f}{\partial x^2} &= H''(xp+(1-x)(1-p))(2p-1)^2 \leq 0.\\
f(x) &= H(xp+(1-x)(1-p))\\
 &=H(1-xp-(1-x)(1-p))\\
&= H((1-x)p+x(1-p))\\
&= f(1-x).
\end{aligned}
\end{equation}
So the maximal value of $f$ is $1$ with $x={1}/{2}$. The minimal value of $f$ is $H(p)$ with $x=0$ or $x=1$.

Denote $A = a + b$ and $B = b - a$, then we have $g = f\left(\frac{A+B}{2}\right)+f\left(\frac{A-B}{2}\right)$. By Lemma~\ref{Lemma:reLem},  there is $1-\sqrt{c}\leq a+b \leq 1+\sqrt{c}$ and $b-a\leq \sqrt{1-c}$. Hence $A\in [1-\sqrt{c},1+\sqrt{c}]$ and $B\in [0, \sqrt{1-c}]$. Here, without loss of generality, we assume $a\le b$. Furthermore, as $a,b\in[0,1]$, we have that $A+B\in[0,2]$ and $A-B\in[0,2]$.
The first and second partial derivatives of $g$ over $A$ and $B$ are
\begin{equation}\label{}
\begin{aligned}
\frac{\partial g}{\partial A} &= \frac{f'(b) + f'(a)}{2},\\
\frac{\partial g}{\partial B} &= \frac{f'(b) - f'(a)}{2} \leq 0,\\
\frac{\partial^2 g}{\partial A^2}  = \frac{\partial^2 g}{\partial B^2} &= \frac{f''(a) + f''(b)}{4} \leq 0.\\
\end{aligned}
\end{equation}
In addition, $g$ is symmetrical about $A = 1$, i.e.,
\begin{align}\label{}
\begin{aligned}
   &f\left(\frac{A+B}{2}\right) + f\left(\frac{A-B}{2}\right) \\
   &= f\left(1-\frac{A+B}{2}\right)+f\left(1-\frac{A-B}{2}\right).
\end{aligned}
\end{align}
In this case, we have $\frac{\partial g}{\partial A}\ge0$ for $A\le1$ and $\frac{\partial g}{\partial A}\le0$ for $A>1$.

As $g$ is symmetric about $A = 1$, we only consider that $A\le 1$. To find the minimal value of $g$, we need to choose the value of $B$ as large as possible and the value of $A$ as small as possible. Considering the additional constraints $A+B\in[0,2]$ and $A-B\in[0,2]$, the minimal value must be obtained in one of following points as shown in Fig.~\ref{fig:FigRange},
\begin{itemize}
\item $A=B=1-\sqrt{c}$, and $g = f(0) + f(\sqrt{c}).$
\item $A=B=\sqrt{1-c}$, and  $g= f(0) + f(\sqrt{1-c}).$
\end{itemize}
Because $c\geq {1}/{2}$, we can verify that
\begin{equation}
\left|\sqrt{c}-\frac{1}{2}\right| \ge \left|\sqrt{1-c} - \frac{1}{2} \right|.
\end{equation}
Hence because $f$ is concave and symmetric about $1/2$ we conclude that $f(\sqrt{c}) \leq f(\sqrt{1-c})$.

\begin{figure}[hbt]
\centering \resizebox{8cm}{!}{\includegraphics{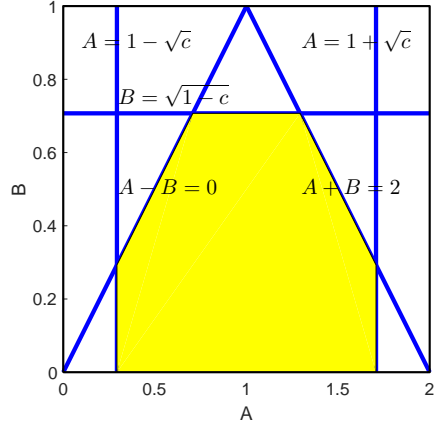}}
\caption{Blues solid lines are used to limit the range of A and B. Yellow area represents the feasible area. In this figure, $c=0.5$.}
\label{fig:FigRange}
\end{figure}

As a result,  the minimal value of $g$ is given by $f(0)+f(\sqrt{c}) = H(p) + H(p\sqrt{c}+(1-p)\sqrt{c})$ and the quantum uncertainty relation of the relative entropy of coherence is
\begin{equation}
\begin{aligned}
& C^\mathbb{X}_{\mathrm{RE}}(\rho) + C^\mathbb{Z}_{\mathrm{RE}}(\rho)\\
&=H(X) + H(Z) - 2*H(\rho)\\
&= f(a) + f(b) - 2*H(\rho)\\
&\geq f(0) + f(\sqrt{c}) - 2*H(\rho)\\
&= H\left(\frac{\sqrt{2P-1}(2\sqrt{c}-1)+1}{2}\right) - H(\rho).
\end{aligned}
\end{equation}
\end{proof}
\subsection{Quantum uncertainty relation of the coherence of formation}\label{app:Fm}
Now, we consider the quantum uncertainty relation of the coherence of formation.
\begin{theorem}
For qubit state $\rho$ and two measurement bases $\mathbb{X} = \{\ket{x}\}$ and $\mathbb{Z} = \{\ket{z}\}$, we have
\begin{equation}
  C^\mathbb{X}_{\mathrm{CF}}(\rho) + C^\mathbb{Z}_{\mathrm{CF}}(\rho) \ge H\left(\frac{1+\sqrt{1-4(2P-1)\sqrt{c}(1-\sqrt{c})}}{2}\right).
\end{equation}
\end{theorem}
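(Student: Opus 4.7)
The plan is to closely parallel the proof of Theorem~\ref{Theo:relative}, substituting the qubit-specific closed form of the coherence of formation in place of the relative-entropy expression. The starting point is the known qubit identity
\begin{equation*}
C_{\mathrm{CF}}^{\mathbb{J}}(\rho) \;=\; h\!\left(\sqrt{1 - \bigl(C_{l_1}^{\mathbb{J}}(\rho)\bigr)^{2}}\right), \qquad h(x) := H_2\!\left(\tfrac{1+x}{2}\right),
\end{equation*}
valid for any qubit state $\rho$ and any basis $\mathbb{J}$, obtained by explicitly minimizing over two-pure-state decompositions in the convex-roof definition of $C_{\mathrm{CF}}$. Taking the spectral decomposition $\rho = p\ket{r}\bra{r} + (1-p)\ket{r_\perp}\bra{r_\perp}$ and setting $a := |\braket{r|x}|^2$, $b := |\braket{r|z}|^2$ as in the proof of Theorem~\ref{Theo:relative}, a direct computation of the off-diagonal matrix element yields $\bigl(C_{l_1}^{\mathbb{X}}(\rho)\bigr)^{2} = 4|\bra{x}\rho\ket{x_\perp}|^{2} = 4a(1-a)(2p-1)^{2} = 4a(1-a)(2P-1)$, and analogously $\bigl(C_{l_1}^{\mathbb{Z}}(\rho)\bigr)^{2} = 4b(1-b)(2P-1)$. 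Defining $F(x) := h\bigl(\sqrt{1 - 4x(1-x)(2P-1)}\bigr)$, the problem therefore reduces to minimizing $F(a) + F(b)$ over the region of $(a,b)$ allowed by Lemma~\ref{Lemma:reLem}.

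From this point on the argument follows Theorem~\ref{Theo:relative} step by step. The function $F$ is manifestly symmetric about $x = 1/2$, and it is concave on $[0,1]$: writing $F = \tilde F \circ u$ with $\tilde F(t) := h(\sqrt{1-t})$ and $u(x) := 4x(1-x)(2P-1)$, the inner map $u$ is concave and the outer map $\tilde F$ is concave and non-decreasing on $[0,1]$, so the composition is concave. Introducing $A := a+b$ and $B := b-a$, the sum $g(A,B) := F\bigl(\tfrac{A-B}{2}\bigr) + F\bigl(\tfrac{A+B}{2}\bigr)$ is then concave, symmetric about $A=1$, and non-increasing in $B \ge 0$, so its minimum on the feasible region of Lemma~\ref{Lemma:reLem} is attained at one of the two corners $(A,B) = (1-\sqrt{c},\,1-\sqrt{c})$ or $(A,B) = (\sqrt{1-c},\,\sqrt{1-c})$. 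Exactly as in Theorem~\ref{Theo:relative}, the comparison $\bigl|\sqrt{c}-\tfrac12\bigr| \ge \bigl|\sqrt{1-c}-\tfrac12\bigr|$ (which holds for qubit bases, where $c \ge 1/2$) combined with the concavity and symmetry of $F$ about $1/2$ picks out the former corner as the minimizer, yielding
\begin{equation*}
g_{\min} \;=\; F(0) + F(1-\sqrt{c}) \;=\; h\!\left(\sqrt{1 - 4\sqrt{c}(1-\sqrt{c})(2P-1)}\right),
\end{equation*}
which is precisely the bound of the theorem after rewriting $h(y) = H_2((1+y)/2)$.

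The main technical obstacle is verifying the concavity of $\tilde F(t) = h(\sqrt{1-t})$ on $[0,1]$: because $\tilde F$ is the composition of the concave function $h$ with the concave-decreasing map $t \mapsto \sqrt{1-t}$, no sign for $\tilde F''$ can be read off from the standard composition rules, and an explicit second-derivative computation is required. Parametrizing $q = (1+\sqrt{1-t})/2 \in [\tfrac12,1]$ and $r := 1-q$, the inequality $\tilde F''(t) \le 0$ reduces after routine algebra to the elementary one-variable inequality $2\rho\ln\rho \le \rho^{2} - 1$ for $\rho := (1-r)/r \ge 1$, which holds because $\phi(\rho) := \rho^{2} - 1 - 2\rho\ln\rho$ satisfies $\phi(1)=0$, $\phi'(1)=0$, and $\phi''(\rho) = 2(\rho-1)/\rho \ge 0$ on $\rho \ge 1$. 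Once the concavity of $\tilde F$ is in place, the remainder of the proof — symmetry of $F$, monotonicity of $g$ in $A$ and $B$, feasibility of the extremal corner, and the final substitution — is a direct transcription of the Theorem~\ref{Theo:relative} analysis.
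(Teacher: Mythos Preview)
Your proposal is correct and follows essentially the same route as the paper: express $C_{\mathrm{CF}}$ via the qubit $l_1$ coherence, reduce to minimizing a one-variable function $F(a)+F(b)$ that is concave and symmetric about $1/2$, and then invoke the feasible-region analysis of Lemma~\ref{Lemma:reLem} exactly as in Theorem~\ref{Theo:relative} to land on the corner $(a,b)=(0,\sqrt{c})$. The only noteworthy difference is the concavity step: the paper factors $f=H\circ g$ and asserts that $g$ is convex (whence $f''=H''(g)(g')^{2}+H'(g)\,g''\le 0$ since $g\ge 1/2$), whereas you factor $F=\tilde F\circ u$ with $u$ concave and supply an explicit verification that $\tilde F(t)=H_2\bigl((1+\sqrt{1-t})/2\bigr)$ is concave and non-decreasing; both arguments establish the same property, yours with more detail.
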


\begin{proof}
Given the spectral decomposition of $\rho= p\ket{r}\bra{r} + (1-p)\ket{r_\perp}\bra{r_\perp}$ we have $\sqrt{2P-1} = |2p-1|$. Suppose $c$ is given by $|\braket{x|z}|^2$, where $\ket{x},\ket{z}$ is a base vector in $\mathbb{X}$, $\mathbb{Z}$, respectively. The coherence of formation $C^{\mathbb{X}}_{CF}(\rho)$ can be rewritten as
\begin{equation}
  \begin{aligned}
  &C^{\mathbb{X}}_{CF}(\rho)\\
  &=H\left(\frac{1+\sqrt{1-C_{l_1}^{\mathbb{X}}(\rho)}}{2}\right),\\
  &=H\left(\frac{1+\sqrt{1-2\sqrt{2P-1}\sqrt{a}\sqrt{1-a}}}{2}\right).
  \end{aligned}
\end{equation}
Similarly,
\begin{equation}
  C^{\mathbb{Z}}_{CF}(\rho) = H\left(\frac{1+\sqrt{1-2\sqrt{2P-1}\sqrt{b}\sqrt{1-b}}}{2}\right).
\end{equation}
Denote function $g(x) = \frac{1+\sqrt{1-2\sqrt{2P-1}\sqrt{x}\sqrt{1-x}}}{2}, f(x) = H\left(g(x)\right)$, then
\begin{equation}
  C^{\mathbb{X}}_{CF}(\rho) + C^{\mathbb{Z}}_{CF}(\rho) = f(a) + f(b).
\end{equation}
Firstly, it is easy to verify that $g(x)$ is a convex function. In addition, we have $g(x) \geq \frac{1}{2}, H'(x)\leq 0$ when $x\ge {1}/{2}$. As $H$ is a concave function, we prove that $f$ is a concave function
\begin{equation}
  \begin{aligned}
  f''(x) = H''(g(x))g'(x)^2 + g''(x)H'(g(x)) \leq 0.
  \end{aligned}
\end{equation}
Also, $f$ is symmetrical about $x=\frac{1}{2}$:
\begin{equation}
  f(x) = f(1-x).
\end{equation}
Therefore, for fixed $c$ and $P$, with a similar the proof of Theorem~\ref{Theo:relative}, we can conclude,
\begin{equation}
\begin{aligned}
  &C^{\mathbb{X}}_{CF}(\rho) + C^{\mathbb{Z}}_{CF}(\rho),\\
  &= f(a) + f(b), \\
  &\geq f(0) + f(\sqrt{c}),\\
  &= H\left(\frac{1+\sqrt{1-2\sqrt{2P-1}\sqrt{c}\sqrt{1-c}}}{2}\right).
\end{aligned}
\end{equation}
\end{proof}

\subsection{Quantum uncertainty relation of the $l_1$ norm of coherence}\label{app:L1}
Now we derive the quantum uncertainty relation of the $l_1$ norm of coherence.
For a qubit state $\rho$ and a measurement basis $\mathbb{X}=\{x,x_\perp\}$, the $l_1$ norm of coherence $C^{\mathbb{X}}_{l_1}$ is
\begin{equation}
C_{l_1}^{\mathbb{X}} = 2|\bra{x}\rho \ket{x_\perp}|.
\end{equation}

To prove the lower bound of $C_{l_1}^{\mathbb{X}}$, we firstly introduce a lemma of three dimensional space:
\begin{lemma}\label{ThreeDlm}
Suppose $\vec{a},\vec{b},\vec{c}\in \mathbb{R}^3$ are three dimensional nonzero vectors. Denote $\alpha,\beta,\gamma$ to be the angle between $\vec{a},\vec{b}$; $\vec{b},\vec{c}$ and $\vec{c},\vec{a}$, respectively ($\alpha,\beta,\gamma \in [0,\pi]$). There is
\begin{equation}
\sin \alpha + \sin \beta \ge \sin \gamma
\end{equation}
\end{lemma}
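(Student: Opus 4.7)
My plan is to interpret $\alpha,\beta,\gamma$ as the sides of the (possibly degenerate) spherical triangle on $S^{2}$ with vertices $\hat a=\vec a/|\vec a|$, $\hat b=\vec b/|\vec b|$, $\hat c=\vec c/|\vec c|$, and then to reduce the inequality to monotonicity of $\sin$ and $\cos$ on $[0,\pi/2]$ via the sum-to-product identities. The preparatory step records two geometric constraints on the angles,
\begin{equation*}
|\alpha-\beta|\le\gamma\le\alpha+\beta,\qquad \alpha+\beta+\gamma\le 2\pi,
\end{equation*}
which follow from positive semi-definiteness of the Gram matrix of $\hat a,\hat b,\hat c$. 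Equivalently, the spherical law of cosines $\cos\gamma=\cos\alpha\cos\beta+\sin\alpha\sin\beta\cos\Gamma$ for some $\Gamma\in[0,\pi]$ gives $\cos(\alpha+\beta)\le\cos\gamma\le\cos(\alpha-\beta)$, from which the two-sided triangle inequality on $\gamma$ is immediate, and the perimeter bound follows by the same argument applied to the polar triangle.

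With these in hand, I would write
\begin{equation*}
\sin\alpha+\sin\beta=2\sin\tfrac{\alpha+\beta}{2}\cos\tfrac{\alpha-\beta}{2},\qquad \sin\gamma=2\sin\tfrac{\gamma}{2}\cos\tfrac{\gamma}{2},
\end{equation*}
and split on the size of $\alpha+\beta$. In Case 1 ($\alpha+\beta\le\pi$), all of $(\alpha+\beta)/2$, $|\alpha-\beta|/2$, and $\gamma/2$ lie in $[0,\pi/2]$, where sine is increasing and cosine is decreasing. Combined with $\gamma\le\alpha+\beta$ and $|\alpha-\beta|\le\gamma$, this yields $\sin(\gamma/2)\le\sin((\alpha+\beta)/2)$ and $\cos(\gamma/2)\le\cos((\alpha-\beta)/2)$; multiplying the two non-negative inequalities gives exactly $\sin\gamma\le\sin\alpha+\sin\beta$.

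In Case 2 ($\alpha+\beta>\pi$), I would reduce to Case 1 via the sine-preserving reflection $\alpha':=\pi-\alpha$, $\beta':=\pi-\beta$. The perimeter bound becomes $\gamma\le 2\pi-\alpha-\beta=\alpha'+\beta'$, while $|\alpha'-\beta'|=|\alpha-\beta|\le\gamma$, so $(\alpha',\beta',\gamma)$ falls under the Case 1 hypotheses with $\alpha'+\beta'\le\pi$, giving $\sin\gamma\le\sin\alpha'+\sin\beta'=\sin\alpha+\sin\beta$ as desired.

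The only nontrivial ingredient is the preparatory step, and in particular the perimeter bound $\alpha+\beta+\gamma\le 2\pi$, which is what makes the Case 2 reduction work. Without it the inequality can fail purely on formal grounds, since three unit vectors in $\mathbb{R}^3$ cannot realize an arbitrary triple of pairwise angles (for example, no three of them are pairwise antipodal). Once this is in place, everything else is a routine monotonicity argument on $[0,\pi/2]$.
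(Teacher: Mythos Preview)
Your proof is correct and takes a genuinely different route from the paper's. The paper argues geometrically: it first reduces the general case to the coplanar one by projecting $\vec b$ onto the plane spanned by $\vec a$ and $\vec c$ and checking via coordinates that the sines of the two adjacent angles can only decrease under this projection; it then handles the coplanar case by the expansion $\sin\alpha+\sin\beta+\sin(\alpha+\beta)=\sin\alpha(1+\cos\beta)+\sin\beta(1+\cos\alpha)\ge 0$. Your argument instead extracts the full set of angular constraints $|\alpha-\beta|\le\gamma\le\alpha+\beta$ and $\alpha+\beta+\gamma\le 2\pi$ up front from the Gram matrix (equivalently, the spherical law of cosines), and then finishes purely by sum-to-product identities and monotonicity of $\sin$, $\cos$ on $[0,\pi/2]$, with the reflection $(\alpha,\beta)\mapsto(\pi-\alpha,\pi-\beta)$ handling the $\alpha+\beta>\pi$ regime. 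What your approach buys is a clean, case-complete derivation that never touches coordinates and makes the role of the perimeter bound explicit; these same angular constraints are exactly the ones the paper later uses in its numerical optimization, so your preparatory step dovetails nicely with that section. The paper's projection argument, by contrast, is more visual and avoids invoking the spherical triangle inequalities as a black box, but its coplanar step as written only treats the configuration $\gamma=2\pi-\alpha-\beta$ and leaves the other coplanar arrangements implicit.
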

\begin{proof}
When $\vec{a},\vec{b},\vec{c}$ are in the same plane, we have $\gamma = 2\pi - \alpha - \beta$ and hence
\begin{equation}
\begin{aligned}
&\sin \alpha + \sin \beta - \sin \gamma \\
=& \sin \alpha + \sin \beta - \sin (2\pi - \alpha - \beta)\\
=&\sin \alpha (1 + \cos\beta) + \sin \beta (1+  \cos\alpha)\\
\ge&0. \\
\end{aligned}
\end{equation}
The last inequality comes as $\alpha,\beta \in [0,\pi]$ and hence $\sin \alpha, (1 + \cos\beta), \sin \beta, (1+  \cos\alpha)\ge0$.

When $\vec{a},\vec{b},\vec{c}$ are not in the same plane, we can consider the plane formed by $\vec{a},\vec{c}$ and project $\vec{b}$ on this plane as $\vec{b'}$. Without loss of generality, assume $\vec{a}=(1,0,0),\vec{b}=(b_x,b_y,b_z),\vec{c}=(c_x,c_y,0)$ such that $b_x^2+b_y^2+b_z^2 = c_x^2+c_y^2=1$. Then we have
\begin{equation}
\begin{aligned}
&b_x^2(1-b_x^2-b_y^2) \ge 0\\
\implies& (b_x^2+b_y^2)(1-b_x^2) \ge b_y^2\\
\implies& b_y^2+b_z^2 \ge \frac{b_y^2}{b_x^2+b_y^2}\\
\implies& \sin^2\alpha \ge \sin^2 \alpha' \\
\implies& \sin\alpha \ge \sin \alpha'
\end{aligned}
\end{equation}
By symmetric, $\sin\beta \ge \sin\beta'$, where $\beta'$ is the angle between $\vec{b'}$ and $\vec{c}$. In general, we have
\begin{equation}
\sin\alpha+\sin\beta \geq \sin\alpha'+\sin\beta' \geq \sin \gamma.
\end{equation}
\end{proof}

Now consider a qubit state and its Bloch sphere representation
\begin{equation}
  \rho = \frac{I+\vec{r} \cdot \vec{\sigma}}{2},
\end{equation}
where $\vec{\sigma} = (\sigma_x, \sigma_y, \sigma_z)$ are the Pauli matrices. Then we have the following result.
\begin{lemma}\label{BSLe}
For two pure qubit states $\ket{x},\ket{z}$ and their corresponding vectors in the Bloch sphere as $\vec{x}',\vec{z}'$. Suppose the angle between $x'$ and $z'$ are $\alpha$, then we have $|\braket{x|z}|^2 = \cos^2\frac{\alpha}{2}$.
\end{lemma}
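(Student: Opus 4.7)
The plan is to use the Bloch sphere representation of pure states and compute $|\braket{x|z}|^2$ directly as the Hilbert-Schmidt inner product of the two projectors. First I would write $\ket{x}\bra{x}=(I+\vec{x}'\cdot\vec{\sigma})/2$ and $\ket{z}\bra{z}=(I+\vec{z}'\cdot\vec{\sigma})/2$, using the fact that a pure qubit state has a unit Bloch vector. Then I would use the standard identity
\begin{equation}
|\braket{x|z}|^2=\mathrm{Tr}\bigl[\ket{x}\bra{x}\ket{z}\bra{z}\bigr],
\end{equation}
which reduces the problem to evaluating a trace over products of Pauli matrices.

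Next I would expand the product and simplify using $\mathrm{Tr}[I]=2$, $\mathrm{Tr}[\sigma_i]=0$, and the key algebraic identity $(\vec{a}\cdot\vec{\sigma})(\vec{b}\cdot\vec{\sigma})=(\vec{a}\cdot\vec{b})I+i(\vec{a}\times\vec{b})\cdot\vec{\sigma}$. Taking the trace kills the cross-product term and the single-Pauli terms, leaving
\begin{equation}
|\braket{x|z}|^2=\tfrac{1}{4}\bigl(2+2\,\vec{x}'\cdot\vec{z}'\bigr)=\tfrac{1+\cos\alpha}{2}.
\end{equation}
Finally, applying the half-angle identity gives $|\braket{x|z}|^2=\cos^2(\alpha/2)$, which is the claimed equality.

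I do not expect a genuine obstacle here: every step is a standard manipulation in the Pauli algebra, and the only thing to be careful about is that $\vec{x}'$ and $\vec{z}'$ are unit vectors (which holds precisely because $\ket{x}$ and $\ket{z}$ are pure), so that the angle $\alpha$ between them satisfies $\vec{x}'\cdot\vec{z}'=\cos\alpha$. The lemma is essentially a dictionary entry translating from Bloch-sphere geometry to Hilbert-space overlaps, and it will be invoked together with Lemma~\ref{ThreeDlm} in the subsequent derivation of the $l_1$ norm uncertainty relation.
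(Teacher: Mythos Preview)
Your proposal is correct and follows essentially the same route as the paper: write the two pure-state projectors in Bloch form, compute $|\braket{x|z}|^2=\mathrm{Tr}[\ket{x}\bra{x}\,\ket{z}\bra{z}]$, and reduce via the Pauli trace identities to $(1+\cos\alpha)/2=\cos^2(\alpha/2)$. Your version is, if anything, slightly more explicit about the Pauli-product identity and the role of purity in ensuring unit Bloch vectors.
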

\begin{proof}

Therefore
\begin{equation}
\begin{aligned}
  &|\braket{x|z}|^2\\
  &= Tr(\ket{x}\bra{x}\ket{z}\bra{z}) \\
  &= Tr\left(\frac{I+\vec{x}'\vec{\sigma}}{2} \frac{I+\vec{z}'\vec{\sigma}}{2}\right) \\
  &= \frac{1}{2}(1 + \vec{x}'\cdot \vec{z}')\\
  &= \frac{1}{2}(1+\cos\alpha)\\
  &= \cos^2\frac{\alpha}{2}.
\end{aligned}
\end{equation}
\end{proof}

With Lemma~\ref{ThreeDlm} and \ref{BSLe}, we can prove the quantum uncertainty relation of the $l_1$ norm of coherence,
\begin{theorem}
For qubit state $\rho$ and two measurement bases $\mathbb{X} = \{\ket{x}\}$ and $\mathbb{Z} = \{\ket{z}\}$, we have
\begin{equation}
  C^\mathbb{X}_{l_1}(\rho) + C^\mathbb{Z}_{l_1}(\rho) \ge 2\sqrt{(2P-1)c(1-c)}.
\end{equation}
\end{theorem}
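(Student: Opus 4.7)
The plan is to translate everything into Bloch-sphere geometry and then invoke Lemmas~\ref{ThreeDlm} and \ref{BSLe}. Writing $\rho=(I+\vec{r}\cdot\vec{\sigma})/2$, the purity constraint $\mathrm{Tr}[\rho^2]=P$ immediately gives $|\vec{r}|=\sqrt{2P-1}$. Let $\vec{x}'$ and $\vec{z}'$ be the unit Bloch vectors of $\ket{x}$ and $\ket{z}$, and let $\theta_x,\theta_z$ denote the angles that $\vec{r}$ makes with $\vec{x}'$ and $\vec{z}'$, while $\alpha$ is the angle between $\vec{x}'$ and $\vec{z}'$.

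First I would establish the key identity $C^{\mathbb{X}}_{l_1}(\rho)=|\vec{r}|\sin\theta_x=\sqrt{2P-1}\sin\theta_x$. Decompose $\vec{r}=(\vec{r}\cdot\vec{x}')\vec{x}'+\vec{r}_{\perp}$; since any $\vec{r}_{\perp}\cdot\vec{\sigma}$ with $\vec{r}_{\perp}\perp\vec{x}'$ anticommutes with $\vec{x}'\cdot\vec{\sigma}$, this operator is purely off-diagonal in the eigenbasis $\{\ket{x},\ket{x_{\perp}}\}$ of $\vec{x}'\cdot\vec{\sigma}$. A short calculation gives $|\bra{x}\rho\ket{x_{\perp}}|=|\vec{r}_{\perp}|/2=|\vec{r}|\sin\theta_x/2$, so that $C^{\mathbb{X}}_{l_1}(\rho)=2|\bra{x}\rho\ket{x_{\perp}}|=\sqrt{2P-1}\sin\theta_x$, and the same identity holds for the $\mathbb{Z}$ basis.

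Next I would convert $c$ into a Bloch-sphere quantity using Lemma~\ref{BSLe}: $c=\cos^2(\alpha/2)$, hence $\sin\alpha=2\sin(\alpha/2)\cos(\alpha/2)=2\sqrt{c(1-c)}$. Now I apply Lemma~\ref{ThreeDlm} to the three unit Bloch vectors $\vec{x}',\vec{z}',\vec{r}/|\vec{r}|$, which gives $\sin\theta_x+\sin\theta_z\ge\sin\alpha$. Multiplying by $\sqrt{2P-1}$ and using the previous identities yields
\begin{equation}
C^{\mathbb{X}}_{l_1}(\rho)+C^{\mathbb{Z}}_{l_1}(\rho)=\sqrt{2P-1}\,(\sin\theta_x+\sin\theta_z)\ge\sqrt{2P-1}\,\sin\alpha=2\sqrt{(2P-1)c(1-c)}.\nonumber
\end{equation}

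The main obstacle, such as it is, is the bookkeeping in the first step: one must be careful that the complex phase of $\bra{x}\rho\ket{x_{\perp}}$ does not affect the magnitude, and that the factor of $2$ in the definition $C^{\mathbb{X}}_{l_1}=2|\bra{x}\rho\ket{x_{\perp}}|$ cancels the $1/2$ from $\rho=(I+\vec{r}\cdot\vec{\sigma})/2$ so that the bound matches $|\vec{r}_{\perp}|$ exactly. Once this reduction is in place, the rest is elementary Bloch-sphere geometry combined with the two lemmas already proved.
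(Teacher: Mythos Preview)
Your proposal is correct and follows essentially the same route as the paper: reduce to Bloch-sphere geometry, express each $l_1$ coherence as $\sqrt{2P-1}$ times the sine of a Bloch angle, invoke Lemma~\ref{BSLe} to convert $c$ into $\cos^2(\alpha/2)$, and then apply Lemma~\ref{ThreeDlm}. The only cosmetic difference is in how you obtain $C^{\mathbb{X}}_{l_1}(\rho)=\sqrt{2P-1}\sin\theta_x$: the paper computes $\bra{x}\rho\ket{x_\perp}$ from the spectral decomposition $\rho=p\ket{r}\bra{r}+(1-p)\ket{r_\perp}\bra{r_\perp}$, whereas you use the Pauli anticommutation argument to isolate the off-diagonal part of $\vec{r}\cdot\vec{\sigma}$; both computations are equivalent and equally short. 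One trivial edge case to mention: when $\vec{r}=0$ (i.e.\ $P=1/2$) you cannot normalize $\vec{r}$ to feed into Lemma~\ref{ThreeDlm}, but then both sides of the inequality vanish and there is nothing to prove.
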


\begin{proof}
Suppose the spectral decomposition of $\rho$ is $\rho=p\ket{r}\bra{r} + (1-p)\ket{r_\perp}\bra{r_\perp}$ and we have $(2p-1)^2 = 2P-1$.
Suppose $c$ is given by $|\braket{x|z}|^2$, where $\ket{x},\ket{z}$ is a base vector in $\mathbb{X},\mathbb{Z}$, respectively.

Denote $\alpha,\beta,\gamma$ to be the angles in the Bloch sphere between the corresponding vectors of $\ket{x},\ket{z}$; $\ket{z},\ket{r}$ and $\ket{x},\ket{z}$ respectively. By Lemma~ \ref{BSLe}, we have $a=|\braket{x|r}|^2 = \cos^2\frac{\alpha}{2}$, $b=|\braket{z|r}|^2=\cos^2\frac{\beta}{2}$, and $c=\cos^2\frac{\gamma}{2}$. The $l_1$ norm of coherence of $\rho$ in the $\mathbb{X}$ basis is
\begin{equation}
  \begin{aligned}
  &C_{l_1}^\mathbb{X}(\rho)\\
  &=2|\bra{x}\rho\ket{x_\perp}|\\
  &=2|p\braket{x|r}\braket{r|x_\perp}+(1-p)\braket{x|r_\perp}\braket{r_\perp|x_\perp}|\\
  &=2|p\braket{x|r}\braket{r|x_\perp}+(1-p)\bra{x}(I-\ket{r}\bra{r})\ket{x_\perp}|\\
  &=2|p\braket{x|r}\braket{r|x_\perp}-(1-p)\braket{x|r}\braket{r|x_\perp}|\\
  &=2|2p-1||\braket{r|x}||\braket{r|x_\perp}|\\
  &=2\sqrt{2P-1}\sqrt{a}\sqrt{1-a}\\
  &=2\sqrt{2P-1}\cos\frac{\alpha}{2}\sin\frac{\alpha}{2}\\
  &=\sqrt{2P-1}\sin\alpha.
  \end{aligned}
\end{equation}

Similarly, we have $C^\mathbb{Z}_{l_1} = \sqrt{2P-1}\sin\beta$. By Lemma~\ref{ThreeDlm}, we have
\begin{equation}
  \begin{aligned}
    &C_{l_1}^{\mathbb{X}} + C_{l_1}^{\mathbb{Z}}\\
    &=\sqrt{2P-1}(\sin\alpha + \sin\beta)\\
    &\geq \sqrt{2P-1}\sin \gamma\\
    &=2\sqrt{2P-1}\sqrt{c(1-c)}.
  \end{aligned}
\end{equation}

It is easy to construct three vectors in Bloch sphere such that $\alpha=0, \beta = \gamma$ and $\sin\alpha+\sin\beta=\sin\gamma$. Therefore, the bound $2\sqrt{2P-1}\sqrt{c(1-c)}$ can be saturated for $C_{l_1}^{\mathbb{X}}+C_{l_2}^{\mathbb{Z}}$ with fixed $c$ and $P$.
\end{proof}

\section{Numerical simulation and comparison}
In this section, we numerically analyze our results. First, we propose a numerical method to calculate the tight lower bound of coherence measures. Then, we compare our results with existing ones.

\subsection{Optimal Numerical Bound for Three Measures}\label{numerical}
The quantum uncertainty relation of the $l_1$ norm of coherence is tight. That is, there always exists a quantum state that saturates the equal sign. However, due to the complexity of the Shannon entropy function, the quantum uncertainty relations of the other two measures are not tight. An efficient numerical approach for the conventional uncertainty relation with qubit states has been proposed in Ref.~\cite{Jorge98}. In this section, we generalize the result and propose a numerical method to calculate the tight lower bound to general quantum uncertainty relation with coherence measures that are concave and symmetric functions.

The three coherence measures considered in this work can be transformed to the following problem:
\begin{equation}\label{Eq:minimize}
\begin{aligned}
&\min g = f(a)+f(b)\\
s.t.~~ & |\braket{x|z}|^2=c,~ c\in [1/2, 1]\\
& |\braket{x|r}|^2 = a\\
& |\braket{z|r}|^2 = b
\end{aligned}
\end{equation}
Specifically, for relative entropy measure: $f(x)=H(xp+(1-x)(1-p))-H(p)$; for $l_1$ measure, $f(x)=2|2p-1|\sqrt{x(1-x)}$; for coherence of formation, $f(x) = H\left(\frac{1+\sqrt{1-2|2p-1|\sqrt{x(1-x)}}}{2}\right)$.
Denote the Bloch sphere representations of the state vectors $\ket{x}$, $\ket{z}$, and $\ket{r}$ as $\vec{x}$, $\vec{z}$, and $\vec{r}$, respectively. Suppose the angles between $\vec{x}$ and $\vec{z}$, $\vec{z}$ and $\vec{r}$, $\vec{r}$ and $\vec{x}$ are $\alpha$, $\beta$, $\gamma$. Then, according to Lemma~\ref{BSLe}, the constraints becomes $a = cos^{2}\frac{\alpha}{2}=(\cos \alpha + 1)/2$, $b = (\cos \beta + 1)/2$, $c = (\cos \gamma + 1)/2$ and the minimization problem in Eq.~\ref{Eq:minimize} becomes

\begin{equation}
\begin{aligned}
&\min f((\cos(\alpha)+1)/2)+f((\cos(\beta)+1)/2)\\
s.t.~~ & \gamma \leq \alpha+\beta \leq 2\pi - \gamma\\
& 0\leq \alpha - \beta \leq \gamma\\
&\gamma \in [0,\pi/2]
\end{aligned}
\end{equation}

When the the function $f(x)$ is concave and symmetrical about $x=\frac{1}{2}$, we can further simply the minimization with the following lemma.
\begin{lemma}\label{angle}
Suppose $f$ is concave and symmetrical about $x=\frac{1}{2}$. Denote $g(A,B) = f(\frac{\cos\alpha+1}{2})+f(\frac{\cos\beta+1}{2})$ where $\alpha = \frac{A+B}{2}$, $\beta = \frac{A-B}{2}$, $\alpha,\beta \in [0,\pi]$. Then $g(A,B)$ is symmetrical about $A=\pi$ and concave about $A$ with fixed $B$.
\end{lemma}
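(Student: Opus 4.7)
My plan is to handle both claims through the single reformulation $h(t) := f\bigl((1+\cos t)/2\bigr) = f\bigl(\cos^2(t/2)\bigr)$, so that $g(A,B) = h(\alpha) + h(\beta)$ with $\alpha = (A+B)/2$ and $\beta = (A-B)/2$. The workhorse identity I rely on is $h(\pi-t) = h(t)$, which follows from $\cos^2\bigl((\pi-t)/2\bigr) = \sin^2(t/2) = 1 - \cos^2(t/2)$ combined with the hypothesis $f(y) = f(1-y)$.

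For the symmetry claim, I substitute $A \mapsto 2\pi - A$ with $B$ held fixed; this sends $(\alpha,\beta) \mapsto (\pi-\beta,\pi-\alpha)$ while preserving the domain $[0,\pi]^2$. Applying the workhorse identity termwise gives
\[
g(2\pi - A, B) = h(\pi-\beta) + h(\pi-\alpha) = h(\beta) + h(\alpha) = g(A,B),
\]
which settles symmetry about $A = \pi$.

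For concavity in $A$ at fixed $B$, the chain rule yields $\partial_A^2 g = \tfrac{1}{4}\bigl[h''(\alpha) + h''(\beta)\bigr]$, so the task reduces to showing $h''(\alpha) + h''(\beta) \le 0$. Writing $u(t) = (1+\cos t)/2$ with $u'(t) = -\tfrac{1}{2}\sin t$ and $u''(t) = -\tfrac{1}{2}\cos t$, a direct computation gives
\[
h''(t) = \tfrac{1}{4}\sin^2 t \cdot f''\bigl(u(t)\bigr) - \tfrac{1}{2}\cos t \cdot f'\bigl(u(t)\bigr).
\]
The first summand is nonpositive because $f$ is concave. The second summand is the delicate one: the symmetry of $f$ about $1/2$ makes $f'$ antisymmetric there, so $f'(u(t))$ has sign opposite to $u(t)-1/2 = (\cos t)/2$, which forces $-\tfrac{1}{2}\cos t\cdot f'(u(t))$ to be nonnegative and potentially large enough to overwhelm the first term pointwise.

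The main obstacle is therefore showing that summing over $\alpha$ and $\beta$ produces the right global sign even though $h''$ need not be pointwise $\le 0$. Using the symmetry just proved, I would restrict to $A\in[B,\pi]$ and regroup $h''(\alpha)+h''(\beta)$ using the identities $\cos\alpha+\cos\beta = 2\cos(A/2)\cos(B/2)$ and $\sin^2\alpha+\sin^2\beta = 1 - \cos A\cos B$, while exploiting the antisymmetry of $f'$ to merge its two evaluations into a single $f''$-weighted combination. If a fully general argument proves elusive, the fallback is case-by-case verification for the three coherence measures of interest: for the $l_1$ norm the composition reduces to $|2p-1|\sin t$, which is manifestly concave on $[0,\pi]$, and for the relative entropy and the coherence of formation one can plug in the explicit forms of $f$ and verify $h''(\alpha)+h''(\beta)\le 0$ using monotonicity of the resulting trigonometric expressions in $A$.
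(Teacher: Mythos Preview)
Your symmetry argument and the paper's are the same: both reduce to the identity $f(h(t))=f(h(\pi-t))$, which follows from $h(t)+h(\pi-t)=1$ together with $f(y)=f(1-y)$.

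For concavity the paper takes a shorter route than you: it asserts that the single composition $f\circ h$ is itself concave on $[0,\pi]$, arguing from $(f\circ h)''=f''(h)\,(h')^{2}+h''\,f'(h)$ that the cross term $h''(\alpha)\,f'(h(\alpha))$ is nonpositive throughout. Your hesitation about precisely this term is well placed---the paper's sign bookkeeping is inverted. For $\alpha\in[0,\pi/2]$ one has $h(\alpha)\ge 1/2$, and a concave function symmetric about $1/2$ is \emph{decreasing} on $[1/2,1]$, so $f'(h(\alpha))\le 0$ there (not $\ge 0$ as the paper writes); by the same reasoning the cross term is nonnegative on all of $[0,\pi]$, exactly the obstruction you flagged.

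In fact the lemma is false at the stated generality, so neither the paper's pointwise argument nor your proposed bound $h''(\alpha)+h''(\beta)\le 0$ can be rescued. Take $f(x)=x(1-x)$, which is concave and symmetric about $1/2$: then $f(h(\alpha))=\tfrac14\sin^{2}\alpha$, and with $B=0$ one gets $g(A,0)=\tfrac14(1-\cos A)$, whose second $A$-derivative $\tfrac14\cos A$ is positive on $[0,\pi/2)$. The same failure occurs for the relative-entropy choice $f(x)=H(xp+(1-x)(1-p))$: at $\alpha=0$ one has $(f\circ h)''(0)=-\tfrac12(2p-1)\,H'(p)>0$ whenever $p\neq 1/2$, so $g(A,0)$ is convex near $A=0$. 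Your fallback of checking the three specific measures is therefore the only viable path, but note that even there you cannot hope for pointwise concavity of $f\circ h$; what the application actually needs is just that the minimum of $g$ over $A\in[\gamma,2\pi-\gamma]$ lands at the boundary $A=\gamma$, and that is the statement you should verify directly for each measure.
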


\begin{proof}
Denote $h(\alpha) = \frac{\cos\alpha+1}{2}$, then we have $f(h(\alpha)) = f(h(\pi-\alpha))$ as $h(\alpha)+h(\pi-\alpha)=1$. So $f(h(\alpha))$ is symmetrical about $\alpha = \frac{\pi}{2}$. Also, we have
\begin{equation}
\frac{\partial^2 f(h(\alpha))}{\partial \alpha^2} = f''(h(\alpha))h'(\alpha)^2 + h''(\alpha)f'(h(\alpha))
\end{equation}
where $f''(h(\alpha))\leq 0$. It is easy to see that when $\alpha \in [0,\frac{\pi}{2}]$, we have $h''(\alpha) = \frac{-\cos \alpha}{2} \leq 0$ and  $f'(h(\alpha)) \geq 0$; when $\alpha \in [\frac{\pi}{2},\pi]$, we have $h''(\alpha) \geq 0$ and $f'(h(\alpha)) \leq 0$. Therefore, $f(h(\alpha))$ is a concave function.

Now for fixed $B$, we have
\begin{equation}
\frac{\partial^2 g}{\partial A^2} = \left(\frac{\partial^2 f(h(\alpha))}{\partial \alpha^2} +\frac{\partial^2 f(h(\beta))}{\partial \beta^2}\right)/4 \leq 0.
\end{equation}
Also $g(A,B)=f(h((A+B)/2))+f(h((A-B)/2))=f(h(\pi-(A+B)/2))+f(h(\pi-(A-B)/2)) = g(2\pi-A, B)$ implies that $g(A,B)$ is symmetrical about $A=\pi$.

\end{proof}

From Lemma~\ref{angle}, we can derive that for fixed $\alpha-\beta$, the minimal value is obtained with $\alpha+\beta = \gamma$. So the problem can be further simplified  to a single variable optimization problem:
\begin{equation}
\begin{aligned}
&\min f((\cos(\alpha)+1)/2)+f((\cos(\gamma-\alpha)+1)/2)\\
s.t.~~ & \gamma/2\leq \alpha \leq \gamma
\end{aligned}
\end{equation}
Which can be solved by a numerical search.


\subsection{Comparison with existing results}
In comparison, we plot in Figure \ref{fig:compare} the lower bounds of the four results in Eq.~\eqref{eq:Korzekwa}  (from Ref.~\cite{Korzekwa14}), Eq.~\eqref{eq:berta} (from Ref.~\cite{berta2010uncertainty}), Eq.~\eqref{eq:sqrtc} (from Ref.~\cite{Jorge98}), and Eq.~\eqref{eq:result2} (our result) with different bases $c$ and purity $P$ of the states. In addition, we also plot the numerical tight bound with a method described in Section \ref{numerical}.
In comparison, our result is less optimal than Eq.~\eqref{eq:sqrtc} when $\rho$ has a larger purity; however our result gives a much stricter bound when the purity of the state is low. Compared to Eq. \eqref{eq:Korzekwa}, our bound is better when $c$ is larger than a certain value.
In summary, we can see that our result outperforms the existing ones when the purity is not large and $c$ not too small.

\begin{figure*}[hbt]\centering
\resizebox{16cm}{!}{\includegraphics{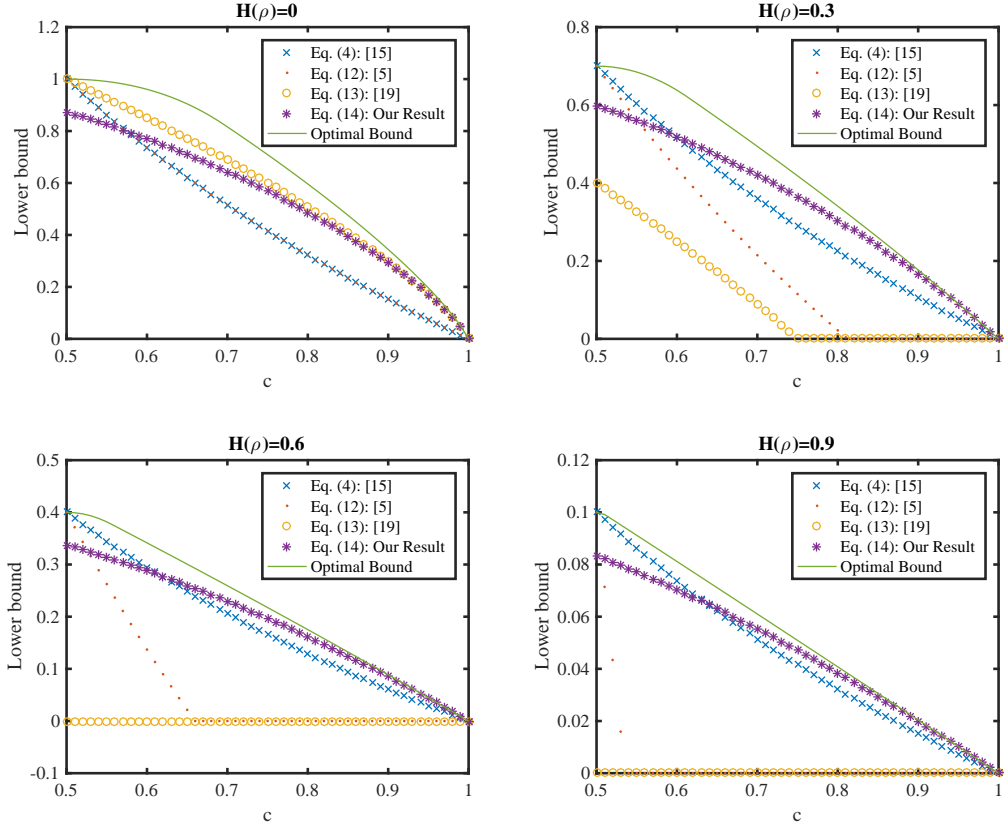}}
\caption{The various results of uncertainty relation using coherence under relative entropy measure. The blue line is our analytic bound and the green line is our optimal numerical bound. }
\label{fig:compare}
\end{figure*}



\section{Conclusion}
In this paper, we introduce a concept of quantum uncertainty relations using coherence. Compared to conventional relations, we only consider the uncertainty that is introduced from quantum effects. Hence, the uncertainty relation connects the true randomness introduced by quantum effects on two bases. The lower bound indicates the amount of quantumness the state possesses. For the qubit case, we derive analytical relations for the three widely adopted coherence measures, relative entropy of coherence, $l_1$ norm of coherence, and coherence of formation. Quantum coherence also plays important roles in quantum optics \cite{PhysRevA.93.032111, PhysRevA.93.032111}, thus applying the quantum uncertainty relation in quantum optics is an interesting subject for future work.

Generalizations of the results to general qudit states are natural extensions of this work. Here, we only consider the largest inner product $c_{\mathrm{max}}$ of the two bases. When considering general qudit states, the overlap $c(x,z) = |\braket{z|x}|^2$ of the two bases cannot be simply characterized by the largest inner product $c_{\mathrm{max}}$. In Ref.~\cite{Coles14}, the authors proposed to additionally make use of the second largest value of $c(x,z)$ to measure the lower bound. For deriving quantum uncertainty for general qudit states, such technique would be useful for deriving a tighter bound.

\section*{Acknowledgements}
This work was supported by the National Natural Science Foundation of China Grant No.~11674193.

\bibliographystyle{apsrev4-1}
\bibliography{bibCU}




\end{document}